\theoremstyle{plain}
\newtheorem{theorem}{Theorem}
\newtheorem{lemma}[theorem]{Lemma}
\theoremstyle{definition}
\newtheorem{definition}[theorem]{Definition}
\newtheorem{example}{Example}
\theoremstyle{remark}
\newcommand{\PNF}{\mathrm{PNF}}
\newcommand{\fmax}{F}
\newcommand{\select}{{\textit{select}}}
\newcommand{\rank}{{\textit{rank}}}
\newcommand{\RLE}{{\textit{rle}}}
\renewcommand{\epsilon}{\varepsilon}
\def\dom{\vartriangleright}
\def\domb{\blacktriangleright}
\def\bmin{{\textit bmin}}
\def\bmax{{\textit bmax}}
\begin{document}

\begin{frontmatter}

\title{Binary Jumbled String Matching for Highly Run-Length\\ Compressible Texts\tnoteref{note1}}
\tnotetext[note1]{A preliminary version of this paper was published on arXiv under the title ``Binary jumbled string matching: Faster indexing in less space'', \href{http://arxiv.org/abs/1206.2523v2}{arXiv:1206.2523v2}, 2012.}

\author{Golnaz Badkobeh}
\ead{golnaz.badkobeh@kcl.ac.uk}
\address{Department of Informatics, King's College, London, UK}

\author{Gabriele Fici}
\ead{gabriele.fici@unipa.it}
\address{Dipartimento di Matematica e Informatica, Universit\`a di Palermo, Italy}

\author{Steve Kroon}
\ead{kroon@sun.ac.za}
\address{Computer Science Division, Stellenbosch University, South Africa}

\author{Zsuzsanna Lipt\'ak}
\ead{zsuzsanna.liptak@univr.it}
\address{Dipartimento di Informatica, Universit\`a di Verona, Italy}

\begin{abstract}
The Binary Jumbled String Matching problem is defined as: Given a string $s$ over $\{a,b\}$ of length $n$ and a query $(x,y)$, with $x,y$ non-negative integers, decide whether $s$ has a substring $t$ with exactly $x$ $a$'s and $y$ $b$'s. 
Previous solutions created an index of size $O(n)$ in a pre-processing step, which was then used to answer queries in constant time. The fastest algorithms for construction of this index have running time $O(n^2/\log n)$ [Burcsi et al., FUN 2010; Moosa and Rahman, IPL 2010], or $O(n^2/\log^2 n)$ in the word-RAM model [Moosa and Rahman, JDA 2012].
We propose an index constructed directly from the run-length encoding of $s$. The construction time of our index is $O(n+\rho^2\log \rho)$, where $O(n)$ is the time for computing the run-length encoding of $s$ and $\rho$ is the length of this encoding---this is no worse than previous solutions if $\rho = O(n/\log n)$ and better if  $\rho = o(n/\log n)$. 
Our index $L$ can be queried in $O(\log \rho)$ time. While $|L|= O(\min(n, \rho^{2}))$ in the worst case, preliminary investigations have indicated that $|L|$ may often be close to $\rho$.
Furthermore, the algorithm for constructing the index is conceptually simple and easy to implement.
In an attempt to shed light on the structure and size of our index, we characterize it in terms of the prefix normal forms of $s$ introduced in  [Fici and Lipt\'ak, DLT 2011]. 
\end{abstract}

\begin{keyword} string algorithms \sep jumbled pattern matching \sep permutation matching \sep Parikh vectors \sep prefix normal form \sep run-length encoding
\end{keyword}

\end{frontmatter}

\newpage

\section{Introduction}\label{sec:intro}

Binary jumbled string matching is defined as follows: Given a string $s$ over $\{a,b\}$ and a query vector $(x,y)$ of non-negative integers $x$ and $y$, decide whether $s$ has a substring containing exactly $x$ $a$'s and $y$ $b$'s. If this is the case, we say that $(x,y)$ {\em occurs} in $s$. The {\em Parikh set of $s$}, $\Pi(s)$, is the set of all vectors occurring in $s$.

For one query, the problem can be solved optimally by a simple sliding window algorithm in $O(n)$ time, where $n$ is the length of the text. Here we are interested in the {\em indexing variant} where the text is fixed, and we expect a large number of queries. 
Recently, this problem and its variants have generated much interest ~\cite{CFL09,BCFL10,BCFL12_IJFCS,MR10,MR12,BCFL12_TOCS,CLWY12}.
The crucial observation is based on the following property of binary strings: 

\medskip
\noindent {\bf Interval Lemma.}
\emph{(\cite{CFL09}) Given a string $s$ over $\Sigma = \{a,b\}$, $|s|=n$. For every $m\in \{1,\ldots,n\}$: if,  for some $x<x'$, both $(x,m-x)$ and $(x',m-x')$ occur in $s$, then so does $(z,m-z)$ for all $z$, $x<z<x'$.
}
\medskip

\noindent It thus suffices to store, for every query length $m$, the minimum and maximum number of $a$'s in all $m$-length substrings of $s$. This information can be stored in a linear size index, and now any query of the form $(x,y)$ can be answered by looking up whether $x$ lies between the minimum and maximum number of $a$'s for length $m=x+y$. The query time is proportional to the time it takes to find $x+y$ in the index, which is constant in most implementations.

This index can be constructed naively in $O(n^2)$ time. 
In~\cite{BCFL10} and independently in~\cite{MR10}, construction algorithms were presented with running time $O(n^2/\log n)$, using reduction to min-plus convolution. In the word-RAM model, the running time can again be reduced to $O(n^2/\log^2 n)$, using bit-parallelism~\cite{MR12}. More recently, a Monte Carlo algorithm with running time $O(n^{1+\epsilon})$ was introduced~\cite{CLWY12}, which constructs an approximate index allowing one-sided errors, with the probability of an incorrect answer depending on the choice of $\epsilon$.

Any binary string $s$ can be uniquely written in the form $s = $ $a^{u_1}b^{v_1}a^{u_2}b^{v_2}\cdots a^{u_r}b^{v_r}$, where the $u_i,v_i$ are non-negative integers, all non-zero except  possibly $u_1$ and $v_r$. The {\em run-length encoding} of $s$ is then defined as $\RLE(s) = (u_1,v_1,u_2,v_2,\ldots,u_r,v_r)$. This representation is often used to compress strings, especially in domains where long runs of characters occur frequently, such as the representation of digital images,  multimedia data\-bases, and time series.

In this paper, we present the Corner Index $L$ which, for strings with good run-length compression, is much smaller than the linear size index used by all previous solutions. It is constructed directly from the run-length encoding of $s$, in time $O(\rho^2 \log \rho)$, where $\rho = |\RLE(s)|$.  The Corner Index has worst-case size $\min(n,\rho^2)$ (measured in the number of entries, which fit into two computer words). We pay for this with an increase in lookup time from $O(1)$ to $O(\log |L|)=O(\log \rho)$. 

In a recent paper~\cite{FL11}, the {\em prefix normal forms} of a binary string were introduced. Given $s$ of length $n$, $\PNF_a(s)$ is the unique string such that, for every $1\leq m \leq n$,  its $m$-length prefix has the same number of $a$'s as the maximum number of $a$'s in any $m$-length substring of $s$; $\PNF_b(s)$ is defined analogously. It was shown in~\cite{FL11} that two strings $s$ and $t$ have the same Parikh set if and only if $\PNF_a(s) = \PNF_a(t)$ and $\PNF_b(s) = \PNF_b(t)$. From this perspective, our index can be viewed as storing the run-length encodings of $\PNF_a(s)$ and $\PNF_b(s)$.  This allows us a fresh view on the problem, and may point to a promising way of proving bounds on the index size. Moreover, our algorithm constitutes an improvement both for the computation and the testing problems on prefix normal forms (see~\cite{FL11}) whenever $\RLE(s)$ is short.

The construction time of $O(n+\rho^2 \log \rho)$, where $O(n)$ is for computing $\RLE(s)$ and $O(\rho^2 \log \rho)$ for constructing the Corner Index, is much better than the previous $O(n^2/\log n)$ time algorithms for strings with short run-length encodings, and no worse as long as $\rho=O(n/\log n)$. For strings with good run-length compression, the increase in lookup time from $O(1)$ to $O(\log |L|)$ is justified in our view by the reduced size and construction time of the new index. 
Finally, our algorithm is conceptually simple and easy to implement.


\section{Preliminaries}\label{sec:preliminaries}

A binary string $s = s_1s_2\cdots s_n$ is a finite sequence of characters from $\{a,b\}$. We denote the length of $s$ by $|s|$. For two strings $s,t$, we say that $t$ is a substring of $s$ if there are indices $1\leq i, j\leq |s|$ such that \ $t = s_i\cdots s_j$. If $i=1$, then $t$ is called a prefix of $s$. We denote by $|s|_a$ (resp.\ $|s|_b$) the number of $a$'s (resp.\ $b$'s) in $s$. The {\em Parikh vector} of $s$ is defined as  $p(s) = (|s|_a,|s|_b)$. We say that a Parikh vector $q$ {\em occurs} in string $s$ if $s$ has a substring $t$ such that $p(t) = q$. The {\em Parikh set} of $s$, $\Pi(s)$, is the set of all Parikh vectors occurring in $s$.

The Interval Lemma from the Introduction implies that, for any binary string $s$, there are functions $F$ and $f$ s.t.\ 
\begin{equation}\label{eq:F}
(x,y) \text{ occurs in $s$ if and only if } f(x+y) \leq x\leq F(x+y),
\end{equation}

\noindent namely, for $m=0,\ldots,|s|$,  $F(m) = \max \{ x \mid (x,m-x)\in \Pi(s) \}$ and $f(m) = \min \{ x \mid (x,m-x)\in \Pi(s) \}$.
This can be stated equivalently in terms of the minimum and maximum number of $b$'s in all substrings containing a {\em fixed number} $i$ of $a$'s. Let us denote by $\bmin(i)$ (resp.\ $\bmax(i)$) the minimum (resp.\ maximum) number of $b$'s in a substring containing exactly $i$ $a$'s. Then 
\begin{equation}\label{eq:G}
(x,y) \text{ occurs in $s$ if and only if } \bmin(x) \leq y\leq \bmax(x).
\end{equation}

The table of functions $F$ and $f$ in~\eqref{eq:F} is the index used in most algorithms for Binary Jumbled String Matching~\cite{BCFL12_IJFCS,MR10,MR12}, while that of functions $\bmin$ and $\bmax$ in~\eqref{eq:G} was used in~\cite{BCFL12_TOCS}. Even though the latter is always smaller, both are linear size in $n$. Note that one table can be computed from the other in linear time (e.g.\ $\bmin(i) = \min \{ m \mid F(m) = i \} - i$).

\begin{example}
Let $s = aabababbaaabbaabbb$. Then $(3,3)$ occurs in $s$ 
while $(5,1)$ does not. We have $F(6)=4$ and $f(6)=2$, $\bmin(3)=0$ and $\bmax(3)=5$. We give the full table of values of the two functions $\bmin$ and $\bmax$ in Table~\ref{tab:G}.

\begin{table}\centering  
\begin{small}
\begin{raggedright}
\begin{tabular}{c *{30}{@{\hspace{2.1mm}}l}}
 $i$  &  0\hspace{1ex}  & 1\hspace{1ex} & 2\hspace{1ex} & 3\hspace{1ex} & 4\hspace{1ex} & 5\hspace{1ex} & 6\hspace{1ex} & 7\hspace{1ex} &
8\hspace{1ex} & 9\hspace{1ex} \\
\hline \rule[-6pt]{0pt}{22pt}
$ \bmin(i)$ & 0 & 0    & 0& 0& 2& 2& 4& 4& 6& 6  \\
\hline \rule[-6pt]{0pt}{22pt}
$ \bmax(i)$ &  3    & 3& 5& 5& 5& 7& 8& 9& 9& 9\\
\hline \\
\end{tabular}
\end{raggedright}\caption{\label{tab:G}Functions $\bmin$ and $\bmax$ for the string $s = aabababbaaabbaabbb$.}
\end{small}
\end{table}

\end{example}


\section{The Corner Index}\label{sec:index}

In Fig.~\ref{fig:Ggraph}, we plot both functions $\bmin$ and $\bmax$ for our example string. The $x$-axis denotes the number of $a$'s and the $y$-axis the number of $b$'s. It follows from \eqref{eq:G} that the integer points within the shaded area correspond to the Parikh set of $s$. The crucial observation is: Since both functions $\bmin$ and $\bmax$ are monotonically increasing step functions, it is sufficient to store those points where they increase. These points are specially marked in Fig.~\ref{fig:Ggraph}.

\begin{figure}
\begin{center}
\includegraphics[height=65mm]{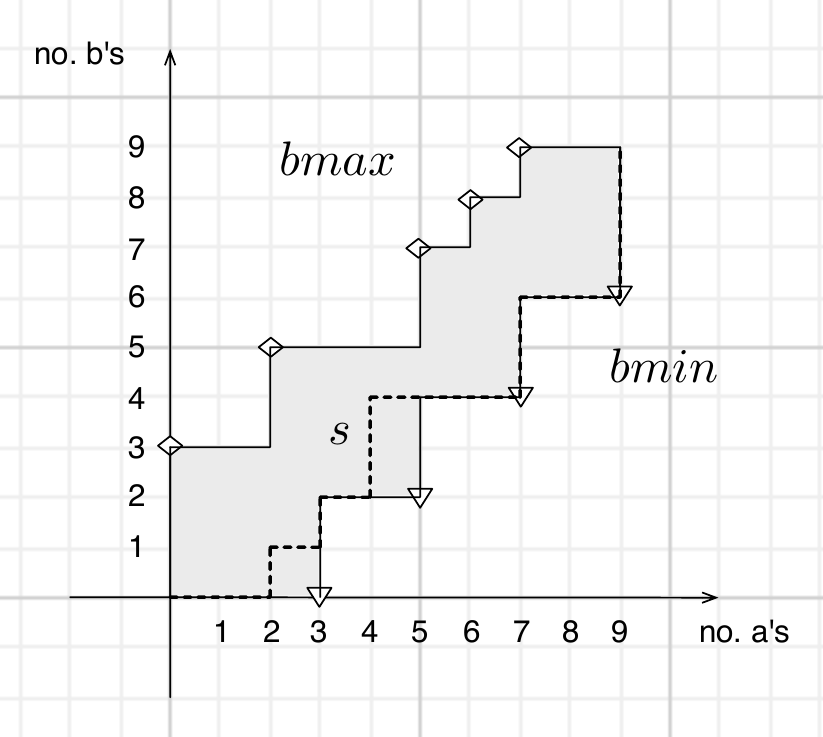}
\end{center}
\caption{\label{fig:Ggraph}The $\bmin$ and $\bmax$ functions for the string $s= aabababbaaabbaabbb$. 
Representing binary strings as walks on the integer grid, $s$ is indicated by a dashed line, while the functions $\bmin$ and $\bmax$ correspond to the prefix normal forms of $s$; see Sec.~\ref{sec:PNF} for more details.}
\end{figure}

\begin{example}\label{ex:4}  
In our example, these points are, for $\bmin$: $\{ (3,0), (5,2), (7,4), (9,6)\},$ and for $\bmax$: $\{ (0,3), (2,5), (5,7), (6,8), (7,9)\}$.

%
\end{example}

\begin{definition}
We define the {\em Corner Index} for the Parikh set of a given binary string $s$ as two ordered sets $L_{\min}$ and $L_{\max}$, where
\begin{align}
L_{\min} &= \{ (i,\bmin(i)) \mid i=|s|_a \text{ or } \bmin(i) < \bmin(i+1) \},\\
L_{\max} &= \{ (i,\bmax(i)) \mid i=0 \text{ or } \bmax(i) > \bmax(i-1) \}.
\end{align}
\end{definition}

The order is according to both components, since for any $(x,y),$ $(x',y')\in L_{\min}$ (or $\in L_{\max}$), we have that $x<x'$ if and only if $y<y'$. Now for any $x$, we can recover $\bmin(x)$ from $L_{\min}$ (resp.\ $\bmax(x)$ from $L_{\max}$) by noting that 
\begin{align}\label{eq:xR}
\bmin(x) &= \bmin(x_R), & \quad \text{} x_R = \min \{ x' \mid x' \geq x, \exists y': (x',y') \in L_{\min} \},\\
\bmax(x) &= \bmax(x_L), & \quad \text{} x_L = \max \{ x' \mid x' \leq x, \exists y': (x',y') \in L_{\max}\}.
\end{align}

\subsection{Construction}

To construct the Corner Index, we will use the run-length encoding of $s$, $\RLE(s) = (u_1,v_1,u_2,v_2,\ldots,u_r,v_r)$. We refer to maximal substrings consisting only of $a$'s (resp.\ $b$'s) as $a$-runs (resp.\ $b$-runs), and denote by $\rho=|\RLE(s)|$, thus $2r-2 \leq \rho \leq 2r$.  It follows directly from the definitions that
\begin{equation}\label{eq:G2}
(x,y) \in \Pi(s) \quad \Rightarrow \quad \forall x'\leq x: \bmin(x') \leq y \text{ and } \forall x'\geq x: \bmax(x') \geq y.
\end{equation}

\begin{lemma}\label{lemma:corners}
Let $s$ be a binary string and $(x,y)\in \Pi(s)$. Then there exists a substring $t$ of $s$ which begins and ends with a full $a$-run such that $p(t) = (x_1,y_1)$ and $x_1\geq x, y_1\leq y$.  Similarly, there is a substring $t'$ of $s$ which begins and ends with a full $b$-run such that $p(t') = (x_2,y_2)$ and $x_2\leq x, y_2\geq y$.
\end{lemma}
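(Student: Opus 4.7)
The plan is to start from any substring witnessing $(x,y)\in\Pi(s)$ and then normalize its two endpoints so that the resulting substring begins and ends with a full $a$-run, while only decreasing $y$ and increasing $x$. The second statement then follows by the same argument with $a$ and $b$ interchanged.

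First I would pick any substring $t_0=s[i..j]$ of $s$ with $p(t_0)=(x,y)$, which exists because $(x,y)\in\Pi(s)$. I would then analyze the left endpoint $i$ in three cases based on $\RLE(s)$. If $s_i=a$ and $i=1$ or $s_{i-1}=b$, then $t_0$ already starts at the beginning of an $a$-run and nothing is done. If $s_i=a$ but $s_{i-1}=a$, i.e.\ $i$ falls in the interior of an $a$-run, I extend the left boundary leftward to the first position of that $a$-run; this increases the number of $a$'s in the substring and does not introduce any $b$'s. If $s_i=b$, I move the left boundary rightward to the first position of the next $a$-run inside $t_0$; this preserves the number of $a$'s and strictly decreases the number of $b$'s. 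After this step the substring starts with a full $a$-run, and its Parikh vector $(x',y')$ satisfies $x'\ge x$ and $y'\le y$. Applying the symmetric case analysis at the right endpoint $j$ produces a substring $t$ that begins and ends with a full $a$-run, with $p(t)=(x_1,y_1)$ satisfying $x_1\ge x$ and $y_1\le y$, as required.

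The second half of the lemma is obtained by repeating the argument with the roles of $a$ and $b$ swapped: starting from the same $t_0$, normalize its boundaries to full $b$-runs by extending inside partial $b$-runs and by trimming initial/final $a$-runs, which yields a substring $t'$ beginning and ending with a full $b$-run whose Parikh vector $(x_2,y_2)$ satisfies $x_2\le x$ and $y_2\ge y$.

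The only mildly delicate point—and the part I would be careful about—is handling the degenerate cases needed for the endpoint-moving operations to make sense. The "trim to the next $a$-run" operation requires $t_0$ to contain at least one $a$; if $x=0$ this fails, but then, provided $s$ contains any $a$ at all, I would simply take $t$ to be any maximal $a$-run of $s$, which trivially begins and ends with a full $a$-run and satisfies $x_1\ge 1\ge 0=x$ and $y_1=0\le y$. The analogous comment applies to $b$ for the second statement. The argument is otherwise a straightforward case analysis on $\RLE(s)$ at the two endpoints, so there is no deeper obstruction beyond this bookkeeping.
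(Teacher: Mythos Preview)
Your proof is correct and follows essentially the same approach as the paper: start from a witness substring and normalize each endpoint by extending into a partial $a$-run or trimming off a leading/trailing $b$-run, so that only $a$'s are gained and only $b$'s are lost. Your treatment is in fact slightly more careful than the paper's, which does not explicitly discuss the degenerate case $x=0$.
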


\begin{proof}
 Let $w=s_{i}\cdots s_{j}$ be a substring of $s$ such that
$p(w)=(x,y)$. 
If $s_i=a$,
then extend $w$ to the left to the beginning of the 
$a$-run containing $s_i$; if $s_i=b$, then shrink $w$ from the left to exclude
all $b$'s of the $b$-run containing $s_i$, likewise for $s_j$. 
The
substring $t$
so obtained fulfils the requirements. A substring $t'$ can be found analogously by extending
$b$-runs and shrinking $a$-runs.
 \end{proof}

Lemma~\ref{lemma:corners}, together with \eqref{eq:G2}, implies that it suffices to compute  substrings beginning and ending with full $a$-runs (for $L_{\min}$) and beginning and ending with full $b$-runs (for $L_{\max}$). The algorithm generates the Parikh vectors of these substrings one by one, inspects them and incrementally constructs $L_{\min}$ and $L_{\max}$. For brevity of exposition, we only give the algorithm for constructing $L_{\min}$; $L_{\max}$ can be computed simultaneously. We need the following definition:

\begin{definition}
Let $(x,y), (x',y') \in \Pi(s)$. We say that $(x,y)$ {\em dominates} $(x',y')$, denoted $(x,y) \dom (x',y')$, if $(x,y) \neq (x',y')$, $x\geq x'$ and $y\leq y'$. For $X\subset \Pi(s),(x,y)\in \Pi(s)$, define $X\dom (x,y)$ iff exists $(x',y') \in X$ s.t. $(x',y') \dom (x,y)$.
\end{definition}

Since $\dom$ is irreflexive and transitive, it is a (strict) partial order. Note that $L_{\min}$ is the set of maximal elements in the poset $(\Pi(s), \dom)$. (Another relation $\domb$ can be defined analogously s.t.\ the set of maximal elements equals $L_{\max}$.)

We present the algorithm computing the index in Fig.~\ref{fig:algo}. Recall that $u_i$ (resp.~$v_i$) is the length of the $i$-th $a$-run (resp.~$b$-run) of $s$. We compute, for every interval size $k \geq 1$, and every $1\leq i\leq r-k+1$, the Parikh vector $(x,y)$ of the substring starting with the $i$th $a$-run and spanning $k$  $a$-runs, and query whether $(x,y)$ is dominated by some element in $L_{\min}$. Note that this is the case if and only if $(x,y)$ is dominated by the unique pair $(x',y')$ where $x'=x_R$ from~\eqref{eq:xR}. If no element of $L_{\min}$ dominates $(x,y)$, then it is added to $L_{\min}$, and all elements of $L_{\min}$ which $(x,y)$ dominates are removed from the list. These can be found consecutively in decreasing order from the position where $(x,y)$ was inserted. The algorithm is illustrated in Fig.~\ref{fig:LG} on our example string. Top left gives the run-length encoding of $s$, with $a$-runs in the first row, and $b$-runs in the second. On the right we list all pairs which need to be inspected, and on the left bottom the final list $L_{\min}$. Elements which are inserted into $L_{\min}$ and later removed are struck through.

\begin{figure}
\begin{center}
\begin{algorithm}{Construct $L_{\min}$}{\label{algo:construct}}
{\bf input:} $\RLE(s) = (u_1,v_1,u_2,v_2,\ldots,u_r,v_r)$\\
\qfor $k$ from $1$ to $r$\\
\qfor $i=1$ to $r-k+1$\\
$(x,y) \gets (u_{i}+\ldots +u_{i+k-1}, v_{i}+\ldots +v_{i+k-2})$\\
\qif not $L_{\min} \dom (x,y)$ \\
\qthen 
insert $(x,y)$ into $L_{\min}$ \\
\qfor each $(x',y')$ in $L_{\min}$ s.t.~$(x,y) \dom (x',y')$, \\
 delete $(x',y')$ from $L_{\min}$
\qfi
\qend
\qend
\end{algorithm}
\end{center}
\caption{\label{fig:algo}The algorithm computing $L_{\min}$.}
\end{figure}

\begin{figure}
\begin{minipage}{6.5cm}
$$
\begin{array}{c@{\;\;}|@{\;\;}rrrrr}
a & 2 & 1 & 1 & 3 & 2\\
b & 1 & 1 & 2 & 2 & 3
\end{array}
$$
\vspace{.1cm}
$$L_{\min}: \text{\sout{\ensuremath{(2,0)}}},(3,0),\text{\sout{\ensuremath{(4,2)}}}, (5,2),$$ \\[-9mm]  $$\hspace{8mm}\text{\sout{\ensuremath{(6,4)}}}, (7,4), (9,6)$$
\end{minipage}
\begin{minipage}{7cm}
$
\begin{array}{l @{\hspace{.5cm}} l}
k & \\
\hline
1 & (2,0) (1,0) (1,0) (3,0) (2,0) \\
2 & (3,1) (2,1) (4,2) (5,2) \\
3 & (4,2) (5,3) (6,4) \\
4 & (7,4) (7,5) \\
5 & (9,6)
\end{array}
$
\end{minipage}
\caption{Computation of $L_{\min}$ for the example $s = aabababbaaabbaabbb$.\label{fig:LG}}
\end{figure}

\subsection{Analysis}

The number of entries of each list is upper bounded by $\min \{|s|_a, |s|_b,{r+1 \choose 2}\}$, thus the total size of the Corner Index is $O(\min(n,\rho^2))$. 
The query time is $O(\log |L|) = O(\log \rho)$. 

The working space of the construction algorithm is the maximum size $L_{\min}$ reaches during the algorithm, which is at most  ${r+1\choose 2} = O(\rho^2)$. 
For the construction time, note that $O(\rho^2)$ pairs have to be inspected. For each, we have to decide whether it is dominated by an element in $L_{\min}$; this query amounts to finding $x_R$ from~\eqref{eq:xR} in $L_{\min}$, in
$O(\log \rho)$ time. Insertion of an element can cause more than one deletion in the list; however, since each element is deleted at most once, we have amortized time $O(\log \rho)$ per element, and thus altogether $O(\rho^2\log \rho)$ time for the construction algorithm. 

Note that $L_{\min}$ can be constructed by inspecting the ${r+1\choose 2}$ pairs in an arbitrary order, although our bound on the construction time assumes that the pairs are generated in constant time. We summarize:

\begin{theorem}
Queries for the Binary Jumbled String Matching problem can be answered in $O(\log \rho)$ time, using an index of size $O(\min(\rho^2,n))$, where $n$ is the length of the text and $\rho$ the length of its run-length encoding. The index can be constructed in  $O(n+ \rho^2 \log \rho)$ time from the string $s$.
\end{theorem}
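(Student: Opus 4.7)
The plan is to establish the three claims---query time, index size, and construction time---by combining correctness of the algorithm in Figure~\ref{fig:algo} with a direct analysis of the list operations on $L_{\min}$ (and symmetrically on $L_{\max}$).

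For correctness, by~\eqref{eq:G} it suffices to recover $\bmin(x)$ and $\bmax(x)$ on any query $x$, which by~\eqref{eq:xR} is immediate once $L_{\min}$ and $L_{\max}$ are the full sets of maximal elements of $(\Pi(s),\dom)$ and of the analogous poset for $\domb$. Lemma~\ref{lemma:corners} together with~\eqref{eq:G2} ensures that every maximal element for $\dom$ is the Parikh vector of some substring that begins and ends in a full $a$-run; the algorithm enumerates exactly these Parikh vectors for all pairs $(i,k)$ with $1\le k\le r$, $1\le i\le r-k+1$, inserts a vector iff no current element of $L_{\min}$ dominates it, and purges any element it subsequently dominates, so on termination $L_{\min}$ is precisely the set of such maxima. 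The construction of $L_{\max}$ is symmetric.

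For the size bound, entries of $L_{\min}$ have distinct first coordinates drawn from $\{0,\dots,|s|_a\}$, so $|L_{\min}|\le n+1$; on the other hand each entry is witnessed by some pair $(i,k)$ as above, so $|L_{\min}|\le\binom{r+1}{2}=O(\rho^2)$. The same bounds hold for $|L_{\max}|$, giving $|L|=O(\min(n,\rho^2))$. The query time is then the cost of the successor/predecessor search of~\eqref{eq:xR} in a sorted structure of size $|L|$, namely $O(\log|L|)=O(\log\rho)$, using a balanced BST.

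For the construction time, the $O(n)$ term covers scanning $s$ once to produce $\RLE(s)$. Precomputing prefix sums of the $u_i$ and $v_i$ in $O(\rho)$ time lets each of the $O(\rho^2)$ candidate pairs be generated in constant time. Each candidate triggers a domination test (a binary search for $x_R$) and possibly an insertion, each costing $O(\log\rho)$; a single insertion may cascade into many deletions, but because each element of $L_{\min}$ is deleted at most once over the whole execution, and the deleted elements sit in a contiguous range around the insertion point by the joint monotonicity of the two coordinates along $L_{\min}$, they can be located by walking outward from that point at amortized cost $O(\log\rho)$ per deletion. Summing over all $O(\rho^2)$ candidates yields the announced $O(n+\rho^2\log\rho)$ bound. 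The main subtlety I expect to have to spell out is precisely this amortized deletion accounting, but the consecutivity of the deleted block in $L_{\min}$ makes it straightforward to charge each removal to the element being deleted rather than to the current insertion.
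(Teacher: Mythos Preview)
Your proposal is correct and follows essentially the same route as the paper: correctness via Lemma~\ref{lemma:corners} and~\eqref{eq:G2} reducing to full-run substrings, the $O(\min(n,\rho^2))$ size bound from distinct first coordinates and the $\binom{r+1}{2}$ candidate pairs, the $O(\log\rho)$ query via the search of~\eqref{eq:xR}, and the $O(\rho^2\log\rho)$ construction by charging each of the $O(\rho^2)$ domination tests and each deletion (at most once per element, contiguous from the insertion point) at $O(\log\rho)$. The only additions you make beyond the paper's own analysis are the explicit mention of prefix sums for constant-time pair generation and the balanced BST as the concrete data structure, both of which are fine and implicit in the paper.
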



\section{Prefix Normal Forms}\label{sec:PNF}

We recall the definitions of $\rank$ and $\select$ (cf.~\cite{NavMaek07}).
Given a binary string $s$, we denote, for $c\in \{a,b\}$, by $\rank_c(s,i) = |s_1\cdots s_i|_c$,  the number of $c$'s in the prefix of length $i$ of $s$, and by $\select_c(s,i)$ the position of the $i$'th $c$ in $s$, i.e., $\select_c(s,i) = \min\{ k : |s_1\cdots s_k|_c = i\}$. 

It is possible \cite{FL11} to associate to any binary string $s$ a unique string $s'$ such that for all $0\leq i\leq |s|$, $\fmax_{s}(i) = \fmax_{s'}(i) = \rank_a(s',i)$, i.e., for any length $i$, the number of $a$'s in the prefix of $s'$ of length $i$ equals the maximum number of $a$'s in any substring of $s$ of length $i$. The string $s'$ is called the \emph{prefix normal form} of $s$ with respect to $a$, denoted $\PNF_a(s)$; the prefix normal form w.r.t.~$b$, $\PNF_b(s)$, is defined analogously.

\begin{example}
For our string  $s = aabababbaaabbaabbb$, the prefix normal forms are  
\begin{align}
\PNF_{a}(s) &=aaabbaabbaabbaabbb, \quad \text{and}\\
\PNF_{b}(s) &=bbbaabbaaabbababaa.
\end{align}

 \end{example}
 
By definition, $\bmin(i)$ is the minimum number of $b$'s in a prefix of $\PNF_{a}(s)$ containing exactly $i$  $a$'s, 
and $\bmax(i)$ the maximum number of $b$'s in a prefix of $\PNF_{b}(s)$ containing exactly $i$  $a$'s. 
So we have: 
\begin{align}
F(i) &= \rank_a(\PNF_a(s),i) \quad \quad \quad \mbox{ for $0\leq i \leq |s|$,}\\
 \bmin(i) &=\select_{a}(\PNF_{a}(s),i)-i \quad \mbox{ for $0\le i \le |s|_{a}$}, \text{and} \\
\bmax(i) &= 
\begin{cases} \select_{a}(\PNF_{b}(s),i+1)-(i+1) & \mbox{ for $0\le i < |s|_{a}$}, \\ |s|_{b} & \mbox{ for $ i = |s|_{a}$}. 
\end{cases}
\end{align}
 
In fact, if we represent binary strings by drawing a horizontal unit line segment for each $a$ and a vertical one for each $b$, then 
$\PNF_a(s)$ is represented by function $\bmin$, and $\PNF_b(s)$ by function $\bmax$, see Fig.~\ref{fig:Ggraph}. 

Moreover, the run-length encoding of $\PNF_{a}(s)$ contains the same information as the list $L_{\min}$ output by our algorithm:  Indeed, let $\RLE(\PNF_{a}(s))$ $= (u'_1,v'_1,u'_2,v'_2,\ldots,u'_{r'},v'_{r'})$. Then, setting $p_{m}=\sum_{i=1}^{m}u'_{i}$ and $q_{m}=\sum_{i=1}^{m}v'_{i}$, one has 

\begin{equation}
L_{\min}=\{(p_{m},q_{m-1}) \mid m=1,\ldots ,r'\}.
\end{equation}

In particular, $|L_{\min}|=\frac{1}{2}|\RLE(\PNF_{a}(s))|$, and this gives a bound on the size of the output in terms of the prefix normal form.


\section{Open problems}\label{sec:conc}

We conclude with some open problems. First, we are interested in tighter bounds on the size of the Corner Index in terms of $\rho$, the length of the run-length encoding of the input string---our preliminary experiments on random strings suggest that the size of the index may often be close to $\rho$. Second, how much working space is required by our algorithm: in our experiments it was rare for the maximal size of the index during construction to exceed the final index size. Hopefully this working space can be bounded by making use of the structure of the posets $(\Pi(s), \dom)$ and $(\Pi(s),\domb)$ introduced in Sec.~\ref{sec:index}. Third, does the number of maximal pairs in these posets (which is the total length of the run-length encodings of the two $\PNF$s) constitute a lower bound on the size of {\em any} index for the Binary Jumbled String Matching problem? Better understanding these posets could also lead to an improvement of our algorithm's running time: if we could characterize maximal pairs, it may no longer be necessary to inspect all possible pairs.

\subsection*{Acknowledgements}

We thank Karim Dou\"ieb who took part in the original conception of this algorithm as well as in some of the early discussions, and two anonymous referees for helpful comments. Part of this research was done while Zs.L.~was supported by a Marie Curie Intra-European Fellowship (IEF) within the 7th Framework Programme of the European Union (grant no.~PIEF-GA-2010-274778).

\begin{small}
\bibliographystyle{abbrv}
\bibliography{parikh_2012}

\begin{thebibliography}{1}

\bibitem{BCFL10}
P.~Burcsi, F.~Cicalese, G.~Fici, and {\relax Zs}.~Lipt{\'a}k.
\newblock On table arrangements, scrabble freaks, and jumbled pattern matching.
\newblock In {\em Proc.\ of the 5th International Conference on Fun with
  Algorithms (FUN 2010)}, volume 6099 of {\em LNCS}, pages 89--101. Springer,
  2010.

\bibitem{BCFL12_IJFCS}
P.~Burcsi, F.~Cicalese, G.~Fici, and {\relax Zs}.~Lipt{\'a}k.
\newblock Algorithms for jumbled pattern matching in strings.
\newblock {\em Int. J. Found. Comput. Sci.}, 23(2):357--374, 2012.

\bibitem{BCFL12_TOCS}
P.~Burcsi, F.~Cicalese, G.~Fici, and {\relax Zs}.~Lipt{\'a}k.
\newblock On approximate jumbled pattern matching in strings.
\newblock {\em Theory Comput. Syst.}, 50(1):35--51, 2012.

\bibitem{CFL09}
F.~Cicalese, G.~Fici, and {\relax Zs}.~Lipt{\'a}k.
\newblock Searching for jumbled patterns in strings.
\newblock In {\em Proc.\ of the Prague Stringology Conference 2009 (PSC 2009)},
  pages 105--117. Czech Technical University in Prague, 2009.

\bibitem{CLWY12}
F.~Cicalese, E.~Laber, O.~Weimann, and R.~Yuster.
\newblock Near linear time construction of an approximate index for all maximum
  consecutive sub-sums of a sequence.
\newblock In {\em Proc.\ of the 23rd Annual Symposium on Combinatorial Pattern
  Matching (CPM 2012)}, volume 7354 of {\em LNCS}, pages 149--158. Springer,
  2012.

\bibitem{FL11}
G.~Fici and {\relax Zs}.~Lipt{\'a}k.
\newblock On prefix normal words.
\newblock In {\em Proc.\ of the 15th International Conference on Developments
  in Language Theory (DLT 2011)}, volume 6795 of {\em LNCS}, pages 228--238.
  Springer, 2011.

\bibitem{MR10}
T.~M. Moosa and M.~S. Rahman.
\newblock Indexing permutations for binary strings.
\newblock {\em Inf. Process. Lett.}, 110(18-19):795--798, 2010.

\bibitem{MR12}
T.~M. Moosa and M.~S. Rahman.
\newblock Sub-quadratic time and linear space data structures for permutation
  matching in binary strings.
\newblock {\em J. Discrete Algorithms}, 10:5--9, 2012.

\bibitem{NavMaek07}
G.~Navarro and V.~M{\"a}kinen.
\newblock Compressed full-text indexes.
\newblock {\em ACM Comput. Surv.}, 39(1), 2007.

\end{thebibliography}
\end{small}

\end{document}